\documentclass[12pt,thmsa,onecolumn,laterpaper]{article}
\usepackage{amsfonts}

\usepackage{amsmath}
\usepackage[all]{xy}
\usepackage{graphicx}



\setcounter{MaxMatrixCols}{30} \setlength{\oddsidemargin}{10mm}
\setlength{\evensidemargin}{0mm} \setlength{\headsep}{0mm}
\setlength{\topmargin}{14mm} \setlength{\textheight}{20cm}
\setlength{\textwidth}{14cm}
\newenvironment{proof}{\noindent{\bf{Proof.}}}{\hfill$\Box$\newline}

\newtheorem{corollary}{Corollary}[section]

\newtheorem{example}{Example}[section]

\newtheorem{proposition}{Proposition}[section]
\newtheorem{theorem}{Theorem}[section]
\begin{document}

\title{Codes Satisfying the Chain Condition with a Poset Weights}
\author{Luciano Panek\thanks{%
Centro de Engenharias e Ci\^{e}ncias Exatas, UNIOESTE, Av.
Tarqu\'{i}nio Joslin do Santos, 1300, CEP 85870-650, Foz do
Igua\c{c}u, PR, Brazil. Email:
lucpanek@gmail.com} \and Marcelo Firer%
\thanks{%
IMECC - UNICAMP, Universidade Estadual de Campinas, Cx. Postal 6065,
13081-970 - Campinas - SP, Brazil. E-mail: mfirer@ime.unicamp.br}}
\date{}
\maketitle

\begin{abstract}
In this paper we extend the concept of generalized Wei weights for
poset-weight codes and show that all linear codes $C$ satisfy the chain
condition if support of $C$ is a subposet totally ordered.

\vspace{0.5cm}

\textit{Key words}: Poset codes, generalized Hamming weights, chain
condition, total order.
\end{abstract}

\section{Introduction}

In 1987 Niederreiter (\cite{HN1}) generalized the classical
problem of coding theory: find a linear code over a finite field
$\Bbb{F}_{q}$ of given length and dimension having the maximum
possible minimum Hamming distance (\cite{MS}). Later, in 1995,
Brualdi, Graves e Lawrence (\cite{RB}) extended the Niederreiter's
generalization introducing the notion of poset-codes. They started
the study of codes with a poset-weight, as we briefly introduce in
the next paragraph.

Let $\left( P,\leq \right) $ be a partially ordered finite set, abbreviated
as \textit{poset}, and assume $P=\left\{ 1,2,\ldots ,n\right\} $. An \textit{%
ideal} $I$ of $P$ is a subset of $P$ with the property that $y\in I$ and $%
x\leq y$, implies that $x\in I$. Given $A\subset P$, we denote by $%
\left\langle A\right\rangle $ the smallest ideal of $P$ containing $A$.
Given $x=\left( x_{1},\ldots ,x_{n}\right) \in \Bbb{F}_{q}^{n}$, the \textit{%
support} of $x$ is the set
\[
\mathrm{s}\text{\textrm{upp}}\left( x\right) =\left\{ i\in P:x_{i}\neq
0\right\} \text{.}
\]
We define the \textit{poset-weight} $w_{P}$ of $x$ (also called $P$\textit{%
-weight}), as the cardinality of the smallest ideal containing $\mathrm{s}$%
\textrm{upp}$\left( x\right) $, that is,
\[
w_{P}\left( x\right) =\left| \left\langle \mathrm{s}\text{\textrm{upp}}%
\left( x\right) \right\rangle \right| \text{.}
\]
The $P$-weight $w_{P}$ induces a metric in the vector space $\Bbb{F}_{q}^{n}$
defined by $d_{P}\left( x,y\right) =w_{P}\left( x-y\right) $ (%
\cite[Lemma 1.1]{RB}). If $P$ is antichain, i.e., $x\leq y$ iff $x=y$, then
the $P$-weight is usual Hamming weight $w_{H}$. An important family of poset
weights (non-Hamming weights) which can be applied to concrete communication
systems are Rosenbloom-Tsfasman weights (see \cite{RT}, \cite{SK}).

Motivated by several applications in cryptography, Wei introduced
in 1991 the concept of generalized Hamming weights (\cite{W}). We
extend here the concept of generalized Wei weights to
poset-weights. Let $P=\left\{ 1,2,\ldots ,n\right\} $ be a
partially ordered set. If $D$ is a linear subspace of the linear
code $C$ we write $D\leq C$. When $D$ is a proper
subspace of $C$ we write $D<C$. The \textit{generalized Hamming }$P$-\textit{%
weight} $\Vert \cdot \Vert _{P}$ of a $r$-dimensional subspace $D\leq \Bbb{F}%
_{q}^{n}$ is defined as
\[
\left\| D\right\| _{P}=\left| \bigcup_{x\in D}\left\langle \mathrm{s}\text{%
\textrm{upp}}\left( x\right) \right\rangle \right| \text{.}
\]
The $r$\textit{-th minimum Hamming }$P$-\textit{weight} of a $k$-dimensional
code $C\leq \Bbb{F}_{q}^{n}$ is
\[
d_{r}\left( C\right) =\min \left\{ \left\| D\right\| _{P}:D\leq C,\dim
\left( D\right) =r\right\} \text{.}
\]
A $k$-dimensional code $C\leq \Bbb{F}_{q}^{n}$ with $P$-weights hierarchy $%
\left( d_{1}\left( C\right) ,\ldots ,d_{k}\left( C\right) \right) $ is
called an $\left[ n;k;d_{1}\left( C\right) ,\ldots ,d_{k}\left( C\right)
\right] _{q}$\textit{-code}.

Many new perfect codes have been found with such poset-metrics (see \cite{HK}%
, for example). Motivated by this fact we investigated in this work the
possibility of the existence of new codes satisfying the chain condition
with the generalized $P$-weights. In the terminology of Wei and Yang (\cite
{WK}), a $k$-dimensional code $C\leq \Bbb{F}_{q}^{n}$ satisfies the \textit{%
chain condition} if there exists a sequence of linear subspaces (\textit{%
maximal flag})
\[
D_{1}<D_{2}<\ldots <D_{k-1}<D_{k}=C\text{,}
\]
with $\left\| D_{r}\right\| _{P}=d_{r}\left( C\right) $ and $\dim \left(
D_{r}\right) =r$ for all $r\in \left\{ 1,2,\ldots ,k\right\} $. In the case
that $P$ is antichain ($w_{P}=w_{H}$) the Hamming codes, dual Hamming codes,
Reed-Muller codes for all orders, maximum-separable-distance codes and Golay
codes satisfy the chain condition (see \cite{WK}). Moreover, every perfect
code must be a code satisfying the chain condition (see \cite{MS}).

In this work we will show that any poset-code $C\leq \Bbb{F}_{q}^{n}$ with
support totally ordered satisfies the chain condition. Moreover, the
sequence of linear subspaces $D_{1}<D_{2}<\ldots <D_{k-1}<D_{k}=C$ that
achieve the minimum Hamming $P$-weights is unique. It follows that if $%
\left\| D_{r}\right\| _{P}=d_{r}\left( C\right) $ for all $r\in \left\{
1,2,\ldots ,k\right\} $, then $D_{1}<D_{2}<\ldots <D_{k-1}<D_{k}=C$.

\section{Codes Satisfying the Chain Condition}

Before we show that any poset-code with support totally ordered
satisfies the chain condition, we give an example in the case that
$P$ is a weak order and shows the monotonicity of the minimum
poset-weights.

We denote by $spanX$ the linear subspace of $\Bbb{F}_{q}^{n}$ spanned by the
set $X\subset \Bbb{F}_{q}^{n}$.

\begin{example}
Let $W=n_{1}\mathbf{1}\oplus \ldots \oplus n_{9}\mathbf{1}$ the \emph{weak
order} given by the ordinal sum of the antichains $n_{1}\mathbf{1,}\ldots
,n_{9}\mathbf{1}$ with $3$ elements. Explicitly, $W=n_{1}\mathbf{1}\oplus
\ldots \oplus n_{9}\mathbf{1}$ is the poset whose underlying set and order
relation are given by
\[
\left\{ 1,2,\ldots ,27\right\} =n_{1}\mathbf{1}\cup \ldots \cup n_{9}\mathbf{%
1}\text{,}
\]
\[
n_{1}\mathbf{1=}\left\{ 1,2,3\right\} ,n_{2}\mathbf{1=}\left\{ 4,5,6\right\}
,\ldots ,n_{9}\mathbf{1=}\left\{ 25,26,27\right\}
\]
and $x<y$ if and only if $x\in n_{i}\mathbf{1}$, $y\in n_{j}\mathbf{1}$ for
some $i,j$ with $i<j$.

If $M_{9\times 3}\left( \Bbb{F}_{2}\right) $ is the linear space of all $%
9\times 3$ matrices over the finite field $\Bbb{F}_{2}$, we defined the
poset-weight $w_{W}$ of
\[
x=\left(
\begin{array}{ccc}
a_{1} & a_{2} & a_{3} \\
a_{4} & a_{5} & a_{6} \\
a_{7} & a_{8} & a_{9} \\
\vdots  & \vdots  & \vdots  \\
a_{22} & a_{23} & a_{24} \\
a_{25} & a_{26} & a_{27}
\end{array}
\right) \in M_{9\times 3}\left( \Bbb{F}_{2}\right)
\]
as $w_{W}\left( x\right) =w_{W}\left( a_{1},a_{2},a_{3},\ldots
,a_{25},a_{26},a_{27}\right) $.

In the Hamming space, the $\left[ 27;3;3,6,9\right] _{2}$-code
\[
C=span\left\{ \left(
\begin{array}{ccc}
1 & 0 & 0 \\
1 & 0 & 0 \\
1 & 0 & 0 \\
0 & 0 & 0 \\
0 & 0 & 0 \\
0 & 0 & 0 \\
0 & 0 & 0 \\
0 & 0 & 0 \\
0 & 0 & 0
\end{array}
\right) ,\left(
\begin{array}{ccc}
0 & 0 & 0 \\
0 & 0 & 0 \\
0 & 0 & 0 \\
0 & 1 & 0 \\
0 & 1 & 0 \\
0 & 1 & 0 \\
0 & 0 & 1 \\
0 & 0 & 0 \\
0 & 0 & 0
\end{array}
\right) ,\left(
\begin{array}{ccc}
0 & 0 & 0 \\
0 & 0 & 0 \\
0 & 0 & 0 \\
0 & 0 & 0 \\
0 & 0 & 0 \\
0 & 1 & 0 \\
0 & 0 & 1 \\
0 & 0 & 1 \\
0 & 0 & 1
\end{array}
\right) \right\}
\]
does not satisfy the chain condition. Now over the weak-metric space $%
M_{9\times 3}\left( \Bbb{F}_{2}\right) $, $C$ is a $\left[
27;3;7,19,25\right] _{2}$-code that satisfies the chain condition. We
observed that
\[
\mathrm{s}\text{\textrm{upp}}\left( C\right) =\left\{
1,4,7,11,14,17,21,24,27\right\}
\]
is totally ordered in weak order $W$ \emph{(}see Figure 1\emph{)}.

\begin{figure}[tph]
\centering
\includegraphics[width=1.10 in]{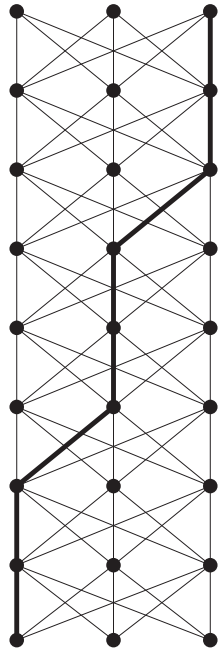}
\caption{Weak order $W=n_{1}\mathbf{1}\oplus \ldots \oplus n_{9}\mathbf{1}$.}
\end{figure}
\end{example}

As in \cite{W}, we have the monotonicity of the minimum
poset-weights.

\begin{proposition}
\label{prop.mon.}For any $\left[ n;k;d_{1}\left( C\right) ,\ldots
,d_{k}\left( C\right) \right] _{q}$-code $C\leq \Bbb{F}_{q}^{n}$ we have
that
\[
1\leq d_{1}\left( C\right) <d_{2}\left( C\right) <\ldots <d_{k}\left(
C\right) \leq n\text{.}
\]
\end{proposition}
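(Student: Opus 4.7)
The plan is to adapt Wei's original monotonicity argument to the poset setting. The bounds $1 \leq d_1(C)$ and $d_k(C) \leq n$ are immediate: any nonzero $x \in C$ has nonempty support, so every one-dimensional $D \leq C$ satisfies $\|D\|_P \geq 1$; and the union defining $\|C\|_P$ is contained in $P$, which has $n$ elements. The real content is the strict inequality $d_{r-1}(C) < d_r(C)$ for $2 \leq r \leq k$.

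For this, I would fix an $r$-dimensional subspace $D_r \leq C$ achieving $\|D_r\|_P = d_r(C)$ and produce from it an $(r-1)$-dimensional subspace $D_{r-1} < D_r$ with $\|D_{r-1}\|_P \leq \|D_r\|_P - 1$. The construction uses the fact that
\[
I_r := \bigcup_{x \in D_r} \langle \mathrm{supp}(x) \rangle
\]
is itself an ideal of $P$. I would pick a maximal element $m$ of $I_r$ and define $D_{r-1} = \{x \in D_r : x_m = 0\}$, the kernel of the coordinate functional $\phi_m \colon D_r \to \mathbb{F}_q$, $x \mapsto x_m$.

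The key observation is that $m$ must actually occur in some $\mathrm{supp}(x)$: indeed $m \in I_r$ means $m \leq i$ for some $i \in \mathrm{supp}(x) \subseteq I_r$, and maximality of $m$ in $I_r$ forces $m = i$. Hence $\phi_m$ is nonzero, so $\dim(D_{r-1}) = r-1$. Moreover, every $y \in D_{r-1}$ satisfies $m \notin \mathrm{supp}(y)$, and by the same maximality argument $m$ cannot lie in $\langle \mathrm{supp}(y) \rangle$ either (otherwise $m \leq j$ for some $j \in \mathrm{supp}(y) \subseteq I_r$, forcing $j = m$). Thus
\[
\bigcup_{y \in D_{r-1}} \langle \mathrm{supp}(y) \rangle \subseteq I_r \setminus \{m\},
\]
so $\|D_{r-1}\|_P \leq \|D_r\|_P - 1$. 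Combined with $d_{r-1}(C) \leq \|D_{r-1}\|_P$, this yields the desired strict inequality.

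The only real subtlety is the step that replaces Wei's ``drop a nonzero coordinate'' trick: in the Hamming case one removes any coordinate in the joint support, but here an arbitrary coordinate in $I_r$ may be strictly below a coordinate that contributes much more to the $P$-weight, so zeroing it out need not shrink the ideal. Choosing $m$ maximal in the ideal $I_r$ is the device that circumvents this obstacle, and that is the step I expect to need the most care to justify cleanly.
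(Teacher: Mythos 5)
Your argument is correct and follows essentially the same route as the paper: both proofs zero out a maximal coordinate (the paper takes a maximal element of $\mathrm{supp}(D)$, which coincides with a maximal element of the ideal $I_r$) and pass to the kernel of the corresponding coordinate functional to drop the $P$-weight by at least one. Your write-up is in fact more careful than the paper's, which asserts $\left\| D_{i}\right\| _{P}\leq \left\| D\right\| _{P}-1$ without the maximality justification you supply.
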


\begin{proof}
We observed initially that $d_{r-1}\left( C\right) \leq
d_{r}\left( C\right) $. In fact, let $D_{r-1}$ and $D_{r}$
subcodes of $C$ with dimensions $r-1 $ and $r$ respectively such
that $\left\| D_{r-1}\right\| _{P}=d_{r-1}\left( C\right) $ and
$\left\| D_{r}\right\| _{P}=d_{r}\left( C\right) $. If $\left\|
D_{r-1}\right\| _{P}>\left\| D_{r}\right\| _{P}$, then for any
subcode $D_{r-1}^{\prime }<D_{r}$ of dimension $r-1$ we have that $%
\left\| D_{r-1}^{\prime }\right\| _{P}\leq \left\| D_{r}\right\|
_{P}<\left\| D_{r-1}\right\| _{P}=d_{r}\left( C\right) $. But this
contradiction the minimality of $d_{r}\left( C\right) $. We claim that the inequality $%
d_{r-1}\left( C\right) \leq d_{r}\left( C\right) $ is strict. Let
$D$ a subcode of $C$ with dimension $r$ such that $\left\|
D\right\| _{P}=d_{r}\left( C\right) $. If $i$ is a maximal element of $%
\mathrm{s}$\textrm{upp}$\left( D\right) $, then $D_{i}:=\left\{
v\in D:v_{i}=0\right\}$ is a subcode of $C$ with dimension $r-1$
such that
\[
d_{r-1}\left( C\right) \leq \left\| D_{i}\right\| _{P}\leq \left\| D\right\|
_{P}-1=d_{r}\left( C\right) -1\text{.}
\]
\end{proof}

Since $d_{r+1}(C)\geq d_{r}(C)+1$ and $d_{k}(C)\leq n$ we
immediately get the ge\-ne\-ra\-li\-zed Singleton bound:

\begin{corollary}
For an $\left[ n;k;d_{1}\left( C\right) ,\ldots ,d_{k}\left( C\right)
\right] _{q}$-code $C\leq \Bbb{F}_{q}^{n}$,
\[
r\leq d_{r}\left( C\right) \leq n-k+r\text{.}
\]
\end{corollary}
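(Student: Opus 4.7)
The plan is to derive both inequalities directly from Proposition \ref{prop.mon.}, which already gives the strict monotonicity chain $1\leq d_{1}(C)<d_{2}(C)<\ldots <d_{k}(C)\leq n$. Since each $d_{r}(C)$ is a positive integer and the sequence is strictly increasing, consecutive terms differ by at least one, i.e.\ $d_{r}(C)\geq d_{r-1}(C)+1$ for $r\geq 2$. This is the only mechanism I need; the rest is telescoping.

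For the lower bound $r\leq d_{r}(C)$, I would argue by induction on $r$. The base case $r=1$ is exactly the inequality $d_{1}(C)\geq 1$ supplied by the proposition. For the inductive step, assuming $d_{r-1}(C)\geq r-1$, the strict inequality $d_{r}(C)>d_{r-1}(C)$ yields $d_{r}(C)\geq d_{r-1}(C)+1\geq r$. Equivalently, summing the unit gaps $d_{1}(C)\geq 1$, $d_{2}(C)-d_{1}(C)\geq 1$, $\ldots$, $d_{r}(C)-d_{r-1}(C)\geq 1$ telescopes to $d_{r}(C)\geq r$.

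For the upper bound $d_{r}(C)\leq n-k+r$, I would run the same telescoping argument downward from index $k$. The strict monotonicity gives $d_{r+1}(C)-d_{r}(C)\geq 1$ for every $r$, so iterating from $r$ up to $k$ produces $d_{k}(C)-d_{r}(C)\geq k-r$. Combined with $d_{k}(C)\leq n$ from the proposition, this rearranges to $d_{r}(C)\leq d_{k}(C)-(k-r)\leq n-k+r$, which is the desired inequality.

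There is essentially no obstacle: the Singleton-type bound is a purely combinatorial consequence of the monotonicity statement, and no further structure of $P$, $C$, or the underlying field is needed. The only thing to be careful about is that the $d_{r}(C)$ are integers, so that a strict inequality between integers upgrades to a gap of at least $1$; this is what makes the telescoping exact rather than merely asymptotic.
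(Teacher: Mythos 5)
Your proposal is correct and follows exactly the paper's route: the paper derives the corollary from Proposition \ref{prop.mon.} via the observations $d_{r+1}(C)\geq d_{r}(C)+1$ and $d_{k}(C)\leq n$, which is precisely your telescoping argument written out in full. No differences worth noting.
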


Now we will show that any poset-code $C$ with $\mathrm{s}$\textrm{upp}$%
\left( C\right) $ totally ordered satisfies the chain condition.

\begin{theorem}
\label{teo.chain}Let $C$ be a code in $\Bbb{F}_{q}^{n}$, endowed with a
poset-weight $w_{P}$. If $\mathrm{s}$\textrm{upp}$\left( C\right) $ is a
subposet of $P$ totally ordered then $C$ satisfies the chain condition.
\end{theorem}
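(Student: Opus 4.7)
The plan is to exploit the fact that when $\mathrm{supp}(C)$ is totally ordered, the ideal $\langle \mathrm{supp}(D)\rangle$ generated by the support of any subcode $D\leq C$ is governed entirely by the largest element of $\mathrm{supp}(D)$. I would first fix an enumeration $\mathrm{supp}(C)=\{j_{1}<j_{2}<\ldots <j_{m}\}$ in the total order inherited from $P$, and then define the natural filtration
\[
V_{\ell }=\{c\in C:c_{j_{s}}=0\text{ for all }s>\ell \},\qquad \ell =0,1,\ldots ,m.
\]
Because every codeword is supported in $\mathrm{supp}(C)$, one has $V_{0}=\{0\}$ and $V_{m}=C$. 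Evaluating at the coordinate $j_{\ell }$ gives an injective linear map $V_{\ell }/V_{\ell -1}\hookrightarrow \mathbb{F}_{q}$, so consecutive dimensions differ by $0$ or $1$; in particular $\dim V_{\ell }$ attains every integer between $0$ and $k$.

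Next I would compute $d_{r}(C)$ explicitly. For any subspace $D\leq C$, the set $\mathrm{supp}(D)\subseteq \mathrm{supp}(C)$ is totally ordered, so $\bigcup_{x\in D}\langle \mathrm{supp}(x)\rangle =\langle \max \mathrm{supp}(D)\rangle $ and therefore $\Vert D\Vert _{P}=|\langle \max \mathrm{supp}(D)\rangle |$. Letting $\ell _{r}:=\min \{\ell :\dim V_{\ell }=r\}$, any $r$-dimensional $D\leq C$ with $\max \mathrm{supp}(D)=j_{\ell }$ must sit inside $V_{\ell }$ and therefore forces $\ell \geq \ell _{r}$; conversely any $r$-dimensional $D\subseteq V_{\ell _{r}}$ necessarily meets the coordinate $j_{\ell _{r}}$, since otherwise $D\subseteq V_{\ell _{r}-1}$, contradicting $\dim V_{\ell _{r}-1}<r$. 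Hence
\[
d_{r}(C)=|\langle j_{\ell _{r}}\rangle |,
\]
and the minimum is attained precisely by $r$-dimensional subspaces contained in $V_{\ell _{r}}$.

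Finally, to exhibit the required maximal flag I would simply take $D_{r}:=V_{\ell _{r}}$. By construction $\dim D_{r}=r$, $\Vert D_{r}\Vert _{P}=|\langle j_{\ell _{r}}\rangle |=d_{r}(C)$, and the strict monotonicity $\ell _{1}<\ell _{2}<\ldots <\ell _{k}$ yields the strict inclusions $D_{1}<D_{2}<\ldots <D_{k}=C$, so $C$ satisfies the chain condition. The only step that requires any real care is the identification of $d_{r}(C)$ with $|\langle j_{\ell _{r}}\rangle |$ and the verification that the minimising subspaces are exactly those inside $V_{\ell _{r}}$; that is also what will be needed for the claimed uniqueness of the flag, since any competing $r$-dimensional minimiser is forced into $V_{\ell _{r}}$, and the nested family $\{V_{\ell _{r}}\}_{r=1}^{k}$ is determined uniquely by the coordinate filtration.
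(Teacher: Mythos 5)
Your proof is correct, and it rests on the same single essential observation as the paper's — that when $\mathrm{supp}(C)$ is totally ordered, $\Vert D\Vert _{P}$ is determined by the unique maximal coordinate of $\mathrm{supp}(D)$ — but the construction of the flag is organized genuinely differently. The paper argues from the subspaces down to vectors: for each $j$ it takes a minimizing $j$-dimensional subcode, extracts from it a vector $v_{j}$ with $w_{P}(v_{j})=d_{j}(C)$, invokes Proposition~\ref{prop.mon.} to get $w_{P}(v_{1})<\cdots <w_{P}(v_{k})$, deduces linear independence of $\{v_{1},\ldots ,v_{k}\}$ by a triangularity argument in the totally ordered support, and sets $D_{j}=span\{v_{1},\ldots ,v_{j}\}$. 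You instead build the canonical coordinate filtration $V_{0}\leq V_{1}\leq \cdots \leq V_{m}=C$ and locate the dimension jumps $\ell _{1}<\cdots <\ell _{k}$. The two constructions produce the same subspaces (one checks $span\{v_{1},\ldots ,v_{r}\}=V_{\ell _{r}}$), but yours buys more: it yields the closed formula $d_{r}(C)=|\langle j_{\ell _{r}}\rangle |$; it makes the verification that $\Vert D_{r}\Vert _{P}=d_{r}(C)$ immediate, whereas the paper asserts $\Vert span\{v_{1},\ldots ,v_{j}\}\Vert _{P}=d_{j}(C)$ without spelling out why no linear combination of the $v_{i}$ can have larger weight; and, since $\dim V_{\ell _{r}}=r$ forces every $r$-dimensional minimizer to coincide with $V_{\ell _{r}}$, it delivers essentially for free the uniqueness of the maximal flag, which the paper proves separately in Theorem~\ref{teo.unic.} by a considerably more laborious contradiction argument with explicit coordinate expansions.
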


\begin{proof}
Since $\mathrm{s}$\textrm{upp}$\left( C\right) $ is totally ordered, for
every $x,y\in C$ we have that $\left\langle \mathrm{s}\text{\textrm{upp}}%
(x)\right\rangle $ and $\left\langle \mathrm{s}\text{\textrm{upp}}%
(y)\right\rangle $ are contained one in the other. It follows that
\[
\left\| D\right\| _{P}=\left| \bigcup_{x\in D}\left\langle \mathrm{s}\text{%
\textrm{upp}}\left( x\right) \right\rangle \right| =\max \left\{
|\left\langle \mathrm{s}\text{\textrm{upp}}\left( x\right) \right\rangle
|:x\in D\right\} \text{,}
\]
so that for every $j\in \{1,2,\ldots ,k\}$ there is $v_{j}\in C$ such that $%
w_{RT}\left( v_{j}\right) =d_{j}\left( C\right) $. The set $%
\{v_{1},v_{2},\ldots ,v_{k}\}$ is linearly independent, because $w_{P}\left(
v_{1}\right) <\ldots <w_{P}\left( v_{k}\right) $ (see Proposition \ref
{prop.mon.}) and $\mathrm{s}$\textrm{upp}$\left( C\right) $ is totally
ordered. Consequently
\[
\dim \left( span\left\{ v_{1},v_{2},\ldots ,v_{j}\right\} \right) =j
\]
and
\[
span\left\{ v_{1}\right\} <span\left\{ v_{1},v_{2}\right\} <\ldots
<span\left\{ v_{1},v_{2},\ldots ,v_{k}\right\} =C\text{.}
\]
Since $\left\| span\left\{ v_{1},v_{2},\ldots ,v_{j}\right\} \right\|
_{P}=d_{j}\left( C\right) $ for every $j\in \left\{ 1,2,\ldots ,k\right\} $,
we find that $C$ satisfies the chain condition.
\end{proof}

\begin{theorem}
\label{teo.unic.}If $C\leq \Bbb{F}_{q}^{n}$, then there is a unique maximal
flag that achieve the generalized minimum Hamming $P$-wei\-ghts if $\mathrm{s%
}$\textrm{upp}$\left( C\right) $ is a subposet of $P$ totally ordered.
\end{theorem}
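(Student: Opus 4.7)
The plan is to exhibit each term $D_r$ of any extremal flag as an intrinsically defined subspace of $C$, so that uniqueness becomes automatic. Specifically, I will show that the only candidate for $D_r$ is
$$C_{\leq d_r(C)} := \{x \in C : w_P(x) \leq d_r(C)\},$$
and that this set is a subspace of $C$ of dimension exactly $r$.

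First, I will re-use the observation already exploited in the proof of Theorem \ref{teo.chain}: because $\mathrm{supp}(C)$ is totally ordered, for any two codewords $x,y \in C$ the ideals $\langle \mathrm{supp}(x)\rangle$ and $\langle \mathrm{supp}(y)\rangle$ are nested by inclusion. This gives both
$$\|D\|_P = \max\{w_P(x) : x \in D\} \quad \text{for every } D \leq C,$$
and the fact that $C_{\leq t} := \{x \in C : w_P(x) \leq t\}$ is closed under linear combinations (if $\langle \mathrm{supp}(x)\rangle \subseteq \langle \mathrm{supp}(y)\rangle$, then $\mathrm{supp}(\alpha x + \beta y) \subseteq \langle \mathrm{supp}(y)\rangle$, hence $w_P(\alpha x + \beta y) \leq w_P(y) \leq t$), and is therefore a subspace of $C$.

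Second, I will pin down $\dim C_{\leq d_r(C)}$. The codewords $v_1,\ldots,v_r$ produced in the proof of Theorem \ref{teo.chain} are linearly independent (their $P$-weights are strictly increasing) and all lie in $C_{\leq d_r(C)}$, so $\dim C_{\leq d_r(C)} \geq r$. For the reverse inequality I invoke the strict monotonicity of Proposition \ref{prop.mon.}: if $\dim C_{\leq d_r(C)} \geq r+1$, then any $(r+1)$-dimensional subspace $D' \leq C_{\leq d_r(C)}$ would satisfy $d_{r+1}(C) \leq \|D'\|_P \leq d_r(C)$, contradicting $d_r(C) < d_{r+1}(C)$.

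Finally, I conclude uniqueness: if $D_1 < D_2 < \ldots < D_k = C$ is any maximal flag realizing the hierarchy, then each $D_r$ is contained in $C_{\leq d_r(C)}$ (for every $x \in D_r$ one has $w_P(x) \leq \|D_r\|_P = d_r(C)$), and since both spaces have dimension $r$ we obtain $D_r = C_{\leq d_r(C)}$. Thus the flag is determined by $C$ and $P$ alone. The only delicate point is the dimension count in the second step, but it reduces immediately to the strict monotonicity already established in Proposition \ref{prop.mon.}, so I anticipate no genuine obstacle.
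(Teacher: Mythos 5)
Your proof is correct, and it takes a genuinely different route from the paper's. The paper argues by contradiction in coordinates: assuming two distinct extremal subspaces at level $t$, it extracts codewords $w,z$ with the same maximal support index $i_s$ and leading coefficient $1$, forms $w+(q-1)z$ to produce a nonzero codeword of strictly smaller $P$-weight, and then builds a $t$-dimensional subspace violating the minimality of $d_t(C)$; this is run first for $t=1$ and then propagated by induction along the flag. You instead give an intrinsic, coordinate-free characterization: each $D_r$ must equal the ``weight ball'' $C_{\leq d_r(C)}=\{x\in C: w_P(x)\leq d_r(C)\}$, which is a subspace because the ideals $\left\langle \mathrm{supp}(x)\right\rangle$ are nested, and has dimension exactly $r$ by combining the $v_1,\ldots,v_r$ from Theorem \ref{teo.chain} (lower bound) with the strict monotonicity $d_r(C)<d_{r+1}(C)$ of Proposition \ref{prop.mon.} (upper bound); since any extremal $r$-dimensional subspace sits inside this ball and matches its dimension, it equals it. Your argument buys several things: it avoids the normalization and leading-term bookkeeping of the paper's proof, it needs no induction along the flag (each level is pinned down independently), and it yields the subsequent Corollary for free, since any $r$-dimensional subspace attaining $d_r(C)$ — not merely one occurring in a flag — is forced to be $C_{\leq d_r(C)}$, so the nesting $D_1<D_2<\ldots<D_k$ is automatic. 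The paper's computation, on the other hand, makes explicit the cancellation mechanism that underlies the uniqueness. Both arguments rest on the same two pillars (the identity $\|D\|_P=\max\{w_P(x):x\in D\}$ for totally ordered support, and strict monotonicity of the $d_r$), so neither is more general; yours is simply the cleaner packaging.
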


\begin{proof}
Let $k=\dim \left( C\right) $, $\left( d_{1}\left( C\right) ,d_{2}\left(
C\right) ,\ldots ,d_{k}\left( C\right) \right) $ the weights hierarchy of $C$
and $\left\{ e_{1},e_{2},\ldots ,e_{n}\right\} $ the canonical base of $\Bbb{%
F}_{q}^{n}$. Let $i_{1}<i_{2}<\ldots <i_{m}$ be the order in the $\mathrm{s}$%
\textrm{upp}$\left( C\right) $ and $D_{1}\leq C$ an $1$-dimensional subcode
of $C$ such that $\left\| D_{1}\right\| _{P}=d_{1}\left( C\right) $. We will
prove that $D_{1}$ is unique. In fact, let $D_{1}^{\prime }\leq C$ be an $1$%
-dimensional subcode of $C$ such that $\left\| D_{1}^{\prime }\right\|
_{P}=d_{1}\left( C\right) $ and $D_{1}^{\prime }\cap D_{1}=\left\{ \mathbf{0}%
\right\} $. Then there are $u\in D_{1}$ and $v\in D_{1}^{\prime }$ such that
\[
u=\alpha _{1}e_{i_{1}}+\ldots +\alpha _{r-1}e_{i_{r-1}}+e_{i_{r}}\text{,}
\]
\[
v=\beta _{1}e_{i_{1}}+\ldots +\beta _{r-1}e_{i_{r-1}}+e_{i_{r}}\text{,}
\]
with $\alpha _{j}\neq \beta _{j}$ for some $j\in \left\{ 1,2,\ldots
,r-1\right\} $ and $w_{P}\left( u\right) =w_{P}\left( v\right) =w_{P}\left(
e_{i_{r}}\right) =d_{1}\left( C\right) $. If
\[
l=\max \left\{ j\in \left\{ 1,2,\ldots ,r-1\right\} :\alpha _{j}\neq \beta
_{j}\right\} \text{,}
\]
it follow that $u+\left( q-1\right) v$ is a non zero vector of $C$ such that
\[
w_{P}\left( u+\left( q-1\right) v\right) =w_{P}\left( e_{i_{l}}\right)
<d_{1}\left( C\right) .
\]
But this contradicts the minimality condition of the $1$-th minimum Hamming $%
P$-weight of the code $C$. We conclude that $D_{1}$ is the unique subcode of
$C$ that achieve the $1$-th minimum Hamming $P$-weight of $C$.

The result follows now by induction on $\dim \left( D_{r}\right) =r$. Let $%
D_{1}<D_{2}<\ldots <D_{t-1}<C$, with $t-1<k$, be the sequence of linear
subspaces that achieve the $r$-th minimum Hamming $P$-weights of the code $C$
with $r\in \left\{ 1,2,\ldots ,t-1\right\} $, assures by Theorem \ref
{teo.chain}. Suppose that $D_{t}$ and $D_{t}^{\prime }$ are $t$-dimensional
subcodes of $C$ containing $D_{t-1}$ such that $D_{t}\neq D_{t}^{\prime }$
and $\left\| D_{t}\right\| _{P}=\left\| D_{t}^{\prime }\right\|
_{P}=d_{t}\left( C\right) $. Then there exist $w\in D_{t}$ and $z\in
D_{t}^{\prime }$ such that
\[
w=\gamma _{1}e_{i_{1}}+\ldots +\gamma _{s-1}e_{i_{s-1}}+e_{i_{s}}\text{,}
\]
\[
z=\eta _{1}e_{i_{1}}+\ldots +\eta _{s}e_{i_{s-1}}+e_{i_{s}}\text{,}
\]
with $\gamma _{j}\neq \eta _{j}$ for some $j\in \left\{ 1,2,\ldots
,s-1\right\} $ and $w_{P}\left( w\right) =w_{P}\left( z\right) =w_{P}\left(
e_{i_{s}}\right) =d_{t}\left( C\right) $. If
\[
p=\max \left\{ j\in \left\{ 1,2,\ldots ,s-1\right\} :\gamma _{j}\neq \eta
_{j}\right\} \text{,}
\]
then $x=w+\left( q-1\right) z$ is a non zero vector of $C$ such that $%
w_{P}\left( x\right) =w_{P}\left( e_{i_{p}}\right) <d_{t}\left( C\right) $
and $x\notin D_{t-1}$. Then, for every linearly independent subset $\left\{
y_{1},\ldots ,y_{t-1}\right\} $ $\subset D_{t-1}$, we find that $%
span\{y_{1},\ldots ,y_{t-1},x\}$ is a $t$-dimensional subspace of $C$ such
that $\left\| span\{y_{1},\ldots ,y_{t-1},x\}\right\| _{P}\leq d_{t}\left(
C\right) -1$, contradicting the minimality of $d_{t}\left( C\right) $.

By induction, the sequence of linear subspaces $D_{1}<D_{2}<\ldots <D_{k-1}<C
$ that achieve the $r$-th minimum Hamming $P$-weights of code $C$ is unique.
\end{proof}

\begin{corollary}
Let $C$ be a linear code in $\Bbb{F}_{q}^{n}$, endowed with a poset-weight $%
w_{P}$, such that $\mathrm{s}$\textrm{upp}$\left( C\right) $ is totally
ordered. If $D_{1},D_{2},\ldots ,D_{k-1},D_{k}=C$ is a sequence of subspaces
of $C$ such that $\left\| D_{r}\right\| _{P}=d_{r}\left( C\right) $ for all $%
r\in \left\{ 1,2,\ldots ,k\right\} $ and $\dim (D_{j})=j$, then $%
D_{1}<D_{2}<\ldots <D_{k-1}<C$.
\end{corollary}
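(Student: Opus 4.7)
The plan is to introduce, for each $r$, the subspace $V_r := \{x\in C : w_P(x)\leq d_r(C)\}$, show that $D_r$ is forced to equal $V_r$, and then conclude from Proposition~\ref{prop.mon.} that the $V_r$ form a strictly increasing chain. First I would verify $V_r$ is actually a subspace. The decisive point, already used in the proof of Theorem~\ref{teo.chain}, is that when $\mathrm{supp}(C)$ is totally ordered, for any $x,y\in C$ the supports lie in a chain, so $w_P(x+y)\leq \max\{w_P(x),w_P(y)\}$; closure under addition and under scalars follows at once.

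Next I would show $D_r\subseteq V_r$. The same total-order argument gives the identity $\|D_r\|_P=\max\{w_P(x) : x\in D_r\}$, so the hypothesis $\|D_r\|_P=d_r(C)$ immediately yields $w_P(x)\leq d_r(C)$ for every $x\in D_r$. To identify $D_r$ with $V_r$ it then suffices to show $\dim V_r=r$. Setting $V_0=\{0\}$, I would check that $V_0\subsetneq V_1\subsetneq\cdots\subsetneq V_k$: Proposition~\ref{prop.mon.} gives $d_r(C)<d_{r+1}(C)$, and the max-formula applied to $D_{r+1}$ produces some $x\in D_{r+1}$ with $w_P(x)=d_{r+1}(C)$, witnessing $V_{r+1}\setminus V_r\neq\emptyset$. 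Since also $V_k=C$ (because $d_k(C)=\|C\|_P$ is the maximum $P$-weight attained in $C$), the only way a chain of $k+1$ nested subspaces of strictly increasing dimension can go from $0$ up to $k$ is $\dim V_r=r$ for every $r$.

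Combining $D_r\subseteq V_r$ with $\dim D_r=\dim V_r=r$ forces $D_r=V_r$ for every $r$, and the strict inclusions $V_1<V_2<\cdots<V_k=C$ transfer directly to the required flag $D_1<D_2<\cdots<D_{k-1}<C$. I do not foresee any serious obstacle here; the only delicate step is confirming that $V_r$ is a subspace, which depends crucially on the total ordering of $\mathrm{supp}(C)$ and would typically fail for a general poset.
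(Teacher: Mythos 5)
Your proof is correct, and it takes a genuinely different route from the paper. The paper obtains this corollary as a consequence of Theorem~\ref{teo.unic.}, whose proof is an induction on dimension with a contradiction argument: given two distinct optimal subspaces one extracts vectors $u,v$ normalized on the top coordinate of their supports and shows $u+(q-1)v$ is a nonzero codeword of $P$-weight strictly below $d_r(C)$. You instead give an explicit canonical description of the optimal subspaces: $V_r=\{x\in C: w_P(x)\leq d_r(C)\}$ is a subspace because the total order on $\mathrm{supp}(C)$ nests the ideals $\left\langle \mathrm{supp}(x)\right\rangle$ and hence makes $w_P$ subordinate to the maximum, i.e.\ $w_P(x+y)\leq\max\{w_P(x),w_P(y)\}$; the identity $\|D\|_P=\max\{w_P(x):x\in D\}$ forces $D_r\subseteq V_r$; and the strictly increasing chain $\{0\}=V_0\subsetneq V_1\subsetneq\cdots\subsetneq V_k=C$ of length $k+1$ pins down $\dim V_r=r$, whence $D_r=V_r$. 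Each step checks out (including the witnesses for strictness coming from Proposition~\ref{prop.mon.} and the max-formula). Your approach buys several things: it is free of the coordinate manipulations and normalization assumptions in the paper's argument, it yields uniqueness of the $r$-dimensional minimizer for every $r$ \emph{without} first assuming the competing subspaces contain a common $D_{r-1}$ (a hypothesis the paper's inductive step quietly relies on, even though the corollary is stated for subspaces that are not a priori nested), and it exhibits the unique maximal flag explicitly as the filtration of $C$ by the weight thresholds $d_1(C)<\cdots<d_k(C)$. The paper's argument, in exchange, is more local and identifies concretely how a violation of uniqueness would produce a codeword contradicting the minimality of $d_r(C)$.
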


If $P$ consists of finitely many disjoint union of $m$'s chains of lengths $%
n $ and $D$ is an $1$-dimensional subspace of the linear space $M_{n\times
m}\left( \Bbb{F}_{q}\right) $ of all $n\times m$ matrices over the finite
field $\Bbb{F}_{q}$, then $\left\| D\right\| _{P}$ become the
Rosenbloom-Tsfasman weight $w_{RT}$ (\cite{RT}), defined as follows: if $%
\left( a_{ij}\right) \in M_{n\times m}\left( \Bbb{F}_{q}\right) $, then
\[
w_{RT}\left( \left( a_{ij}\right) \right) =\sum_{j=1}^{m}\left| \left\langle
\mathrm{s}\text{\textrm{upp}}\left( a_{1j},a_{2j},\ldots ,a_{nj}\right)
\right\rangle \right|
\]
where
\[
\left\langle \mathrm{s}\text{\textrm{upp}}\left( a_{1j},a_{2j},\ldots
,a_{nj}\right) \right\rangle =\left\{ 1,2,\ldots ,i:i=\max \left\{
a_{ij}\neq 0\right\} \right\} \text{.}
\]

In the particular case of the Rosenbloom-Tsfasman spaces $M_{n\times
1}\left( \Bbb{F}_{q}\right) $, we have that:

\begin{corollary}
Every code $C\leq M_{n\times 1}\left( \Bbb{F}_{q}\right) $ satisfies the
chain condition.
\end{corollary}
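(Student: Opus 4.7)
The plan is to recognize this corollary as an immediate specialization of Theorem~\ref{teo.chain}. First I would unpack the setting: the Rosenbloom--Tsfasman space $M_{n\times 1}(\Bbb{F}_q)$ corresponds, in the preceding paragraph's description, to the case $m=1$, so the underlying poset $P$ is the disjoint union of a single chain of length $n$, i.e.\ $P$ itself is totally ordered as $1<2<\cdots<n$.

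The second step is to observe that the support $\mathrm{s}\mathrm{upp}(C)$, being a subset of $P$, inherits the order from $P$ and is therefore also totally ordered as a subposet. This is the only hypothesis needed to invoke the main theorem.

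Finally, I would apply Theorem~\ref{teo.chain} directly to conclude that $C$ satisfies the chain condition. No separate argument is required, and the uniqueness of the maximal flag from Theorem~\ref{teo.unic.} is available for free as a bonus. There is essentially no obstacle here: the content of the corollary is entirely carried by the general theorem, and the only thing to verify is that the RT setting for $m=1$ really does fall under the hypothesis of a totally ordered poset, which it manifestly does.
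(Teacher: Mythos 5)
Your proposal is correct and is exactly the argument the paper intends: the corollary is stated without proof precisely because, for $m=1$, the poset is a single chain, so $\mathrm{supp}(C)$ is automatically totally ordered and Theorem~\ref{teo.chain} applies directly. Nothing is missing.
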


To conclude we presented a lower bound for the number of codes satisfying
the chain condition.

\begin{proposition}
\label{prop.card.}Let $P=\left\{ 1,2,\ldots ,n\right\} $ be a poset and
suppose that $P=P_{1}\cup \ldots \cup P_{r}$ is a partition of $P$ into $r$
chains. If $\nu _{i}=\left| P_{i}\right| $ for all $i\in \left\{ 1,2,\ldots
,r\right\} $, then
\[
\sum_{i=1}^{r}\sum_{j=1}^{\nu _{i}}\prod_{k=1}^{j}\frac{\left( q^{\nu
_{i}}-q^{k-1}\right) }{\left( q^{j}-q^{k-1}\right) }
\]
is a lower bound for the number of codes satisfying the chain
condition in the space $\left( \Bbb{F}_{q}^{n},d_{P}\right) $. The
number $r$ is larger that or equal to the size of the largest
antichain in $P$.
\end{proposition}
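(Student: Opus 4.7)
The plan is to construct many codes satisfying the chain condition by restricting to each chain of the given partition and applying Theorem \ref{teo.chain}. For each $i \in \{1, \ldots, r\}$, let $V_i := \{ x \in \mathbb{F}_q^n : \mathrm{supp}(x) \subseteq P_i \}$; as a coordinate subspace $V_i \cong \mathbb{F}_q^{\nu_i}$. Any nonzero subspace $D \leq V_i$ then has $\mathrm{supp}(D) \subseteq P_i$, which is totally ordered in $P$, so Theorem \ref{teo.chain} guarantees that $D$ satisfies the chain condition. This reduces the problem to counting subspaces inside the $V_i$'s.

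Next, I would count the $j$-dimensional subspaces of each $V_i$. This is the classical Gaussian binomial coefficient
$$\binom{\nu_i}{j}_q = \prod_{k=1}^{j}\frac{q^{\nu_i}-q^{k-1}}{q^{j}-q^{k-1}},$$
obtained in the standard way by counting ordered linearly independent $j$-tuples in $\mathbb{F}_q^{\nu_i}$ and dividing by $|GL_j(\mathbb{F}_q)|$. Summing over $j \in \{1, \ldots, \nu_i\}$ and then over $i \in \{1, \ldots, r\}$ yields exactly the expression in the statement.

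The only subtlety is to verify that no subspace is counted twice, so that the sum is actually a lower bound. If a nonzero $D$ were contained in both $V_i$ and $V_{i'}$ with $i \neq i'$, then $\emptyset \neq \mathrm{supp}(D) \subseteq P_i \cap P_{i'} = \emptyset$, a contradiction; this is where the disjointness of the chains in the partition is used. Restricting each inner sum to $j \geq 1$ excludes the zero subspace, so every code enumerated is counted exactly once, yielding the claimed lower bound on the number of codes satisfying the chain condition.

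For the final assertion that $r$ is at least the size of the largest antichain, I would invoke Dilworth's theorem: in any finite poset the minimum number of chains in a chain partition equals the maximum cardinality of an antichain, so any partition of $P$ into chains must use at least that many chains. The main (minor) obstacle in the whole argument is the bookkeeping for disjointness between the families of codes attached to distinct chains; once that is handled, the proof is a direct application of Theorem \ref{teo.chain}, the Gaussian binomial count, and Dilworth's theorem.
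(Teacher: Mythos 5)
Your proposal is correct and follows essentially the same route as the paper: identify each chain $P_i$ with the coordinate subspace it spans, apply Theorem \ref{teo.chain} to conclude every subcode supported there satisfies the chain condition, count subspaces with the Gaussian binomial coefficient, and invoke Dilworth's theorem for the final assertion. Your explicit check that distinct chains contribute disjoint families of nonzero subspaces is a small point the paper leaves implicit, but it does not change the argument.
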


\begin{proof}
For each subset $P^{\prime }\subseteq P$ we denoted by $\left[
P^{\prime
}\right] $ the subspace of $\mathbb{F}_{q}^{n}$ ge\-ne\-ra\-ted by the base $%
\{e_{i}\}_{i\in P^{\prime }}$, $e_{i}$ the canonical vector of $\Bbb{F}%
_{q}^{n}$. So, if $P={\bigcup }_{i=1}^{r}P_{j}$ is a partition
(disjoint union of sets) we have that $\mathbb{F}_{q}^{n}$ is a
direct sum
\[
\left[ P_{1}\right] \oplus \ldots \oplus \left[ P_{r}\right] \text{.}
\]
As $P_{i}$ is a chain for every $i\in \left\{ 1,2,\ldots
,r\right\} $, the Theorem \ref{teo.chain} assures that all code
$C\leq \left[ P_{i}\right] $ satisfies the chain condition.
Therefore the number of $j$-dimensional codes of $\left[
P_{i}\right] $ satisfying the chain condition equals
\[
\prod_{k=1}^{j}\frac{\left( q^{\nu _{i}}-q^{k-1}\right) }{\left(
q^{j}-q^{k-1}\right) }
\]
for each $i\in \left\{ 1,2,\ldots ,r\right\} $. This completes the proof of
the Proposition. The Dilworth's Theorem assures that $P$ can be partitioned
into $r$ chains if the largest antichain in the poset $P$ has size $r$.
\end{proof}


\begin{thebibliography}{1}
\bibitem[1]{RB}  R. Brualdi, J. S. Graves and M. Lawrence, \textit{Codes
with a poset metric}, Discrete Math. 147 (1995) 57-72.

\bibitem[2]{HK}  J. Y. Hyun and H. K. Kim, \textit{The poset
strutures admitting the extended binary Hamming code to be a
perfect code}, Discrete Math. 288 (2004) 37-47.

\bibitem[3]{HN1}  H. Niederreiter, \textit{A combinatorial problem for
vector spaces over finite fields}, Discrete Math. 96 (1991)
221-228.

\bibitem[4]{RT}  M. Yu Rosenbloom and M. A. Tsfasman, \textit{Codes for the }%
$m$\textit{-metric}, Probl. Inf. Transm. 33 (1997) 45-52.

\bibitem[5]{SK}  M. M. Skriganov, \textit{Coding theory and uniform
distributions}, St. Petesburg Math. J., vol. 13, n. 2, pp. 301-337, 2002.

\bibitem[6]{MS}  F. J. MacWilliams and N. J. A. Sloane, \textit{The Theory
of Error-Correcting Codes}, Amsterdam: North-Holland, 1977.

\bibitem[7]{W}  V. K. Wei, \textit{Generalized Hamming Weights for
Linear Code}, IEEE Trans. Inform. Theory, vol. \textbf{37}, n$^{\text{o}}\,$%
5, pp. 1412-1418, September 1991.

\bibitem[8]{WK}  V. K. Wei and K. Yang, \textit{On the
Generalized Hamming Weights for Product Code}, IEEE Trans. Inform. Theory,
vol. \textbf{39}, n$^{\text{o}}\,$5, pp. 1709-1713, September 1993.
\end{thebibliography}
\end{document}